\def\ps@headings{%
\def\@oddhead{\mbox{}\scriptsize\rightmark \hfil \thepage}%
\def\@evenhead{\scriptsize\thepage \hfil \leftmark\mbox{}}%
\def\@oddfoot{}%
\def\@evenfoot{}}
\def\beq{\begin{equation}}
\def\eeq{\end{equation}}
\def\beqa{\begin{eqnarray}}
\def\eeqa{\end{eqnarray}}
\def\beqan{\begin{eqnarray*}}
\def\eeqan{\end{eqnarray*}}
\def\argmax{\mathop{\mathrm{arg\,max}}}
\def\x{\times}
\newtheorem{theorem}{Theorem}
\newtheorem{lemma}{Lemma}
\def\xhat{\widehat{x}}
\def\la{\leftarrow}
\def\ra{\rightarrow}
\def\arr{\rightarrow}
\def\tm1{t\! - \! 1}
\def\tp1{t\! + \! 1}
\def\Xset{{\cal X}}
\def\xbf{\mathbf{x}}
\def\xbfhat{\widehat{\mathbf{x}}}
\def\Rbar{\overline{R}}
\begin{document}
\bibliographystyle{IEEEtran}

\title{Wireless Scheduling with Dominant Interferers and Applications
to Femtocellular Interference Cancellation}
\ifthenelse{\boolean{conference}}
{ 

    \author{
        \IEEEauthorblockN{Mustafa Riza Akdeniz}
        \IEEEauthorblockA{Polytechnic Institute of\\New York University\\
        Brooklyn, New York\\
        Email: makden01@students.poly.edu}
        \and
        \IEEEauthorblockN{Sundeep Rangan}
        \IEEEauthorblockA{Polytechnic Institute of\\New York University\\
        Brooklyn, New York\\
        Email: srangan@poly.edu}
    }
}{ 
    \author{
        Mustafa Riza Akdeniz,~\IEEEmembership{Student~Member,~IEEE},
        Sundeep Rangan,~\IEEEmembership{Member,~IEEE},
        \thanks{This material is based upon work supported by the National Science
        Foundation under Grant No. 1116589.}
        \thanks{M. Akdeniz (email:makden01@students.poly.edu) and
                S. Rangan (email: srangan@poly.edu) are with the
                Polytechnic Institute of New York University, Brooklyn, NY.}
    }
}

\maketitle

\begin{abstract}
We consider a general class of wireless scheduling and resource allocation problems where
the received rate in each link is determined by the actions of the transmitter
in that link along with a single dominant interferer.  Such scenarios arise in a range
of scenarios, particularly in emerging femto- and picocellular networks with
strong, localized interference.  For these networks, a utility maximizing scheduler based
on loopy belief propagation is presented that enables computationally-efficient
local processing and low  communication overhead.
Our main theoretical result shows that the fixed points of the method
are provably globally optimal for arbitrary
(potentially non-convex) rate and utility functions.
The methodology thus provides globally optimal solutions to
a  large class of inter-cellular interference coordination problems including
subband scheduling, dynamic orthogonalization and beamforming whenever the dominant
interferer assumption is valid.  The paper focuses on applications for systems with
interference cancellation (IC) and suggests a new scheme on optimal rate control,
as opposed to traditional power control.  Simulations are presented in industry
standard femtocellular
network models demonstrate significant improvements in rates over simple reuse 1 without
IC, and near optimal performance of loopy belief propagation for rate selection
in only one or two iterations.
\end{abstract}

\ifthenelse{\boolean{conference}}{}{
    \begin{IEEEkeywords}
    femtocells, cellular systems, belief propagation, message passing, interference cancellation.
    \end{IEEEkeywords}
}

\section{Introduction}
\ifthenelse{\boolean{conference}}{A}{\IEEEPARstart{A}{}}
central challenge in next-generation cellular networks is the presence of strong
interference.  This feature is particularly prominent
in emerging femto- and pico-cellular networks where, due to \emph{ad hoc} and
unplanned deployments, restricted association and mobility in
small cell geometries, mobiles may be exposed to much higher levels of
interference than in traditional planned macrocellular networks
\cite{ChaAndG:08,LopezVRZ:09,AndrewsCDRC:12}.
Methods for advanced intercellular interference coordination (ICIC)
have thus been a key focus of 3GPP LTE-Advanced standardization efforts~\cite{3GPPICIC}
and other cellular standards organizations~\cite{FemtoForum:10}.

This work considers a class of general resource
allocation and scheduling problems
where the received rate in each link is determined primarily by the actions of the
transmitter in that link along with a single dominant interferer.
Interference in traditional macrocellular networks, of course, generally do \emph{not}
follow this assumption as there are typically a large number of mobiles per cell,
and interference generally arises from an aggregate of signals from many sources in the network.
However, in emerging small cell networks with higher cell densities,
the number of mobiles per cell is significantly reduced and strong
interference is most often due to the poor placement of the mobile relative to a single
victim or interfering cell to which the mobile is not connected to.

For such dominant interference networks, we propose a novel ICIC
algorithm based on graphical models and message passing.
Graphical models \cite{WainwrightJ:08,Frey:98}
are a widely-used tool for high-dimensional optimization and Bayesian inference problems
applicable whenever the objective function or posterior distribution
factors into terms with small numbers of variables.
Methods such as loopy belief propagation (BP) then
reduce the global problem to a sequence of
smaller local problems associated with each factor.  The outputs of the local
problems are combined via message passing.
The loopy BP methodology is particularly well-suited to wireless scheduling problems since
the resulting algorithms are inherently distributed and require only
local computations.  Indeed, \cite{Chiang:02} showed that
many widely-used network routing,  congestion control and
power control algorithms can be interpreted as instances
of the sum-product variant of loopy BP.  More recently,
\cite{SanghaviMW:07, BayatiSS:08} used BP techniques for networks with contention graphs,
\cite{sohn_2010} proposed BP for MIMO systems,
and \cite{RanganM:11,RanganM:12} considered a general class of
ICIC problems with weak interference exploiting an approximate BP technique in
\cite{RanganFGS:12-ISIT}.

For dominant interferer networks considered in this paper,
we apply loopy BP methods to a general network utility
maximization problem where the goal is to maximize a sum of utilities
across a system with $n$ links.  The utility in each link is assumed to be
a function of a scheduling decision
on the transmitter of that link, as well as the decision of at most \emph{one}
interfering link.
Aside from this single dominant interferer assumption, the model is completely general,
and can incorporate arbitrary channel relationships as well as sophisticated transmission
schemes including subband scheduling, beamforming and dynamic orthogonalization --
key features of advanced cellular technologies such as 3GPP LTE \cite{Dahlman:07}.
In addition, computing maximum weighted matching for queue
stability~\cite{TassiulasE:92} and maximization of sum utility of
average rates for fairness~\cite{stolyar_2005} can be incorporated into
this formulation.

We show that for single dominant interferer networks,
loopy BP admits a particularly simple implementation:
scheduling decisions can be computed with a small number of rounds
of message passing, where each
round involves simple local computations at the nodes along with communication only
between the transmitter, receiver and dominant interferers in each link.
Our simulations in commercial cellular models indicate near optimal performance in
very small, sometimes only one or two, rounds -- making the algorithms extremely attractive for
practical implementations.

Moreover, we establish rigorously (Theorem \ref{thm:fixPoint})
that any fixed point of the loopy BP algorithm
is guaranteed to be \emph{globally} optimal.  Remarkably, this result
relies solely on the dominant interferer
assumption and otherwise applies to arbitrary utility functions and relationships
between scheduling decisions on interfering and victim links.
In particular, this optimality holds even for non-convex problems.
Optimality results for loopy BP are generally difficult to establish:
Aside from graphs that are cycle free, loopy BP generally only provides approximate
solutions that may not be globally optimal.  Our proof of global optimality
in this case rests on showing that
graphs with single dominant interferers have at most cycle in each connected component.
The result then follows from a well-known optimality property in \cite{WeissFree:01}.

\subsection{Interference Cancelation and Rate Control}

A target application of our methodology is for scheduling problems
with \emph{interference cancelation} (IC) \cite{Ahlswede:71,CoverT:91}.
When an interfering signal is sufficiently strong, it can be decoded
and canceled prior to or jointly with the decoding of the desired signal,
thus eliminating the interference entirely.
IC, particularly when used in conjunction with techniques such as rate splitting
\cite{HanKob:81}, is known to provide significant performance gains
in scenarios with strong interference, where it may even be optimal or near optimal
\cite{EtkinTseWa:08}.  IC may also be useful for mitigating cross-tier
 interference between short-range (e.g.\ femtocellular)
  links operating below larger macrocells \cite{Rangan:10-GC}.
However, while improved computational resources have recently made IC implementable
in practical receiver circuits \cite{Andrews:05,BourdreauPGCWV:09,ShiReed:07},
it is an open problem of how rates should be selected in larger networks
when IC is available at the link-layer.

In this paper, we consider a network where the receiver in each link
can jointly detect the desired signal along with at most one interfering link.
The limitation of joint detection with at most one interfering link
is reasonable since the likelihood of having two very strong interferers is low
in most practical scenarios.
The limit is also desirable since
the computationally complexity at the receiver grows significantly with
each additional link to perform joint detection with.
For networks with IC, we propose to perform  interference coordination based on \emph{rate control}
rather than traditional power control that has been the dominant method in cellular
systems without IC \cite{ChiangHLT:08}.  Specifically, each transmitter operates at
a fixed power, and the system utility is controlled by rate:
increasing the rate improves the utility to the desired user,
while decreasing the rate makes the transmission more ``decodable"
and hence ``cancelable" at the receiver
in any victim link.  Hence, we suggest that in networks with IC, rate control,
as opposed to power control, can be viable method for interference coordination.

\section{System Model}

We consider a system with $n$ links, each link $i$ having one transmitter, TX$i$ and
one receiver, RX$i$.  In a cellular system, multiple logical links may be associated with the
base station, either in the downlink or uplink.  Each link $i$ is to make some
\emph{scheduling decision}, meaning a selection of some variable $x_i \in \Xset_i$
for some set $\Xset_i$.  The scheduling decision could include choices, for example,
of power, beamforming directions, rate or vectors of these quantities in the cases
of multiple subbands -- the model is general.
Associated with each RX$i$ is a \emph{utility function} representing some value or quality of
service obtained by RX$i$.  We assume that the utility function has the form
\[
    f_i(x_i,x_{\sigma(i)}),
\]
where $\sigma(i) \in \{1,\ldots,n\}$ is the index of one link other than $i$,
representing the index of a \emph{dominant} interferer to RX$i$.
Thus, the assumption is that
utility on each link $i$ is a function of the scheduling decisions of the serving transmitter,
TX$i$, and one dominant interfering transmitter, TX$j$ for $j=\sigma(i)$.
Generally,
the dominant interferer with be the transmitter, other than the serving transmitter,
with the lowest path loss to the receiver.
The problem is to find the optimal solution
\beq \label{eq:xopt}
    \xbfhat := \argmax_{\xbf} F(\xbf),
\eeq
where the objective function is the sum utility,
\beq \label{eq:Fsum}
    F(\xbf) := \sum_{i=1}^n f_i(x_i,x_{\sigma(i)}).
\eeq

The utility function $f(x_i,x_{\sigma(i)})$ accounts for the
link-layer conditions that determine the rates and along with quality-of-service (QoS)
requirements for valuation of the traffic.
A general treatment of utility functions for scheduling problems
can be found in \cite{KelleyMT:98,ShakkottaiS:07}.
The optimization problem \eqref{eq:xopt} can be applied to both static and dynamic
problems.

For static optimization, the scheduling vectors $x_i$
are selected once for a long time period
and the utility function is typically of the form
\beq \label{eq:utilStatic}
    f_i(x_i,x_{\sigma(i)}) = U_i(R_i(x_i,x_{\sigma(i)})),
\eeq
where $R_i(x_i,x_{\sigma(i)})$ is the long-term rate as a function of the
TX scheduling decision $x_i$ and decision $x_{\sigma(i)}$ on the dominant
 interferer, while $U_i(R)$ is the utility as a function of the rate.
The problem formulation above can incorporate any of the
common utility functions including: $U_i(R) = R$ which
results in a sum rate optimization; $U_i(R) = \log(R)$ which is the proportional
fair metric and $U_i(R) = -\beta R^{-\beta}$ for some $\beta > 0$ called
an $\beta$-fair utility.  Penalties can also be added if there is a cost
associated with the selection of the TX vector $x_i$ such as power.

To accommodate time-varying channels and traffic loads,
many cellular systems enable fast dynamic scheduling in time slots
in the order of 1 to 2 ms.  For these systems, the utility maximization can be
re-run in each time slot.  One common approach is that
in each time slot $t=0,1,2\ldots$,
the scheduler uses a utility of the form
\beq \label{eq:wtUtil}
   f_i(t,x_i(t),x_{\sigma(i)}(t))= w_i(t)R_i(t,(x_i(t),x_{\sigma(i)}(t)), \ \ \
\eeq
where $w_i(t)$ is a time-varying weight given by the marginal utility
\beq \label{eq:wtMargUtil}
   w_i(t) = \frac{\partial U_i(\Rbar_i(t))}{\partial R},
\eeq and $\Rbar_i(t)$ is exponentially weighted average rate updated
as \beq \label{eq:rateFilt}
    \Rbar_i(t+1) = (1-\alpha)\Rbar_i(t) + \alpha
    R_i(t,\xhat_i(t),\xhat_{\sigma(i)}(t)),
\eeq
where $\xhat_i(t)$ and $\xhat_{\sigma(i)}$ are the TX scheduling decisions
on the serving and interfering links at time $t$. Any maxima of the
optimization \eqref{eq:xopt} with the weighted utility
\eqref{eq:wtUtil} is called a \emph{maximal weight matching}. A
well-known result of stochastic approximation~\cite{stolyar_2005} is
that if $\alpha \arr 0$, and the scheduler performs the maximum
weight matching with the marginal utilities \eqref{eq:wtMargUtil},
then for a large class of processes, the resulting average rates
will maximize the total utility $\sum_i U_i(\Rbar_i(t))$.

The above utilities are designed for infinite backlog queues.
For delay sensitive traffic, one can take the weights $w_i(t)$ to be the queue length
or head-of-line delay.  Maximal weight matching performed with these weights
generally results in so-called throughput optimal performance \cite{TassiulasE:92}.
These results also apply to multihop networks with the so-called backpressure
weights.

\subsection{Systems with IC}

As mentioned in the Introduction, a particularly valuable application of the methodology
in this paper is for rate selection in systems with interference cancelation (IC).
To simplify the exposition, we consider the following simple model of a system with IC:
Assume the power of each transmitter, TX$j$, is fixed to some
level $P_j$, and let $G_{ij}$ denote the gain from TX$j$ to RX$i$ as let $N_i$
denote the thermal noise at RX$i$.
Without IC, the receiver RX$i$ would experience
a signal-to-interference-and-noise ratio (SINR) given by
\beq \label{eq:sinrReuse}
    \rho_i := \frac{G_{ii}P_i}{\sum_{k \neq i} G_{ik}P_k + N_i}.
\eeq
If the interference is treated as Gaussian noise, 
and the system were to operate at the Shannon capacity, the set of rates $R_i$ attainable
at RX$i$ would be limited to
\beq \label{eq:cap1}
    R_i \leq \log_2(1+\rho_i).
\eeq
We call this set of achievable rates the \emph{reuse 1 region}, since
this set is precisely the rates achievable in a cellular system operating
with frequency reuse 1 (i.e.\ all transmitters transmitting across the entire bandwidth)
and treating interference as noise.
We denote the reuse 1 region by ${\mathcal C}_{\rm reuse1}(i)$.

The addition of IC can be seen as a method to expand this rate region.
Specifically,
suppose each RX$i$ can potentially jointly detect and cancel the signals from
at most one interfering transmitter TX$j$ for some index $j=\sigma(i) \neq i$.
The limitation to a single interferer will enable the system
to fit within the dominant interferer model described above.
However, this limitation has minimal
practical impact since jointly detecting more than one interferer is
both computationally intensive and seldom of much value anyway since the presence of strong
interference from more than one source is rare.

Now, to compute the region achievable with joint detection and IC consider Fig.~\ref{fig:region}
which shows the receiver RX$i$ being served by the transmitter TX$i$
while receiving interference from TX$j$ for $j=\sigma(i)$.
In this model, assuming again that RX$i$ can operate at the Shannon capacity, 
it can jointly detect the signals from both
the serving transmitter TX$i$ and interferer TX$j$, if and only if
the rates $R_i$ and $R_j$ satisfy the multiple access channel (MAC)
conditions \cite{CoverT:91}:
\begin{subequations} \label{eq:cap2}
\beqa
    R_i &\leq& \log_2 (1+\tilde{\rho}_{i,i})\\
    R_{\sigma(i)} &\leq& \log_2 (1+\tilde{\rho}_{i,\sigma(i)})\\
    R_i + R_{\sigma(i)} &\leq& \log_2(1+\tilde{\rho}_{i,i}+\tilde{\rho}_{i,\sigma(i)}),
\eeqa
\end{subequations}
where, for $\ell=i$ or $\ell=\sigma(i)$,
$\tilde{\rho}_{i,\ell}$ is the SINR,
\beq \label{eq:sinrIC}
    \tilde{\rho}_{i,\ell} = \frac{G_{i\ell}P_\ell}{\sum_{k \neq i,\sigma(i)}G_{ik}P_k+N_i}.
\eeq
We denote this region by ${\mathcal C}_{\rm IC}(i)$ which is plotted in the bottom panel
of Fig.~\ref{fig:region} along with reuse 1 region ${\mathcal C}_{\rm reuse1}(i)$.
We can see that the addition of IC expands the set of achievable in the region
where the interfering rate is low.  In particular, when the interfering rate $R_{\sigma(i)}$
is sufficiently low (to the left of point A in the figure), the achievable rate $R_i$
on the serving link is identical to the case without any interference.
In this sense, reducing the rate makes it more ``decodable" or ``cancelable".
We let ${\mathcal C}(i)$ denote the total set of feasible rates, which is the union of
the reuse 1 and IC regions:
\[
    {\mathcal C}(i) =     {\mathcal C}_{\rm IC}(i) \cup  {\mathcal C}_{\rm reuse1}(i).
\]
Note that this region is \emph{not} convex.

With these observations, we can now pose the IC problem in the dominant interferer model
as follows:  The decision variable at each transmitter TX$i$ is an \emph{attempted}
rate, denoted $x_i$.  The \emph{achieved} rate at the receiver RX$i$ is then
simply
\[
    R_i = R_i(x_i,x_{\sigma(i)}) := \left\{ \begin{array}{ll}
        x_i, & \mbox{if }
            (x_i,x_{\sigma(i)}) \in {\mathcal C}(i) \\
        0, & \mbox{if }
            (x_i,x_{\sigma(i)}) \not \in {\mathcal C}(i)
            \end{array} \right.
\]
That is, the rate is achieved rate is equal to the attempted rate on the serving link
if and only if the serving and interfering rates are feasible; otherwise, the achieved
rate is zero.
Then, the optimization can be formulated as \eqref{eq:utilStatic} for any utility
function $U_i(R_i)$.

\begin{figure}
    \centering
    \includegraphics[width=0.3\textwidth]{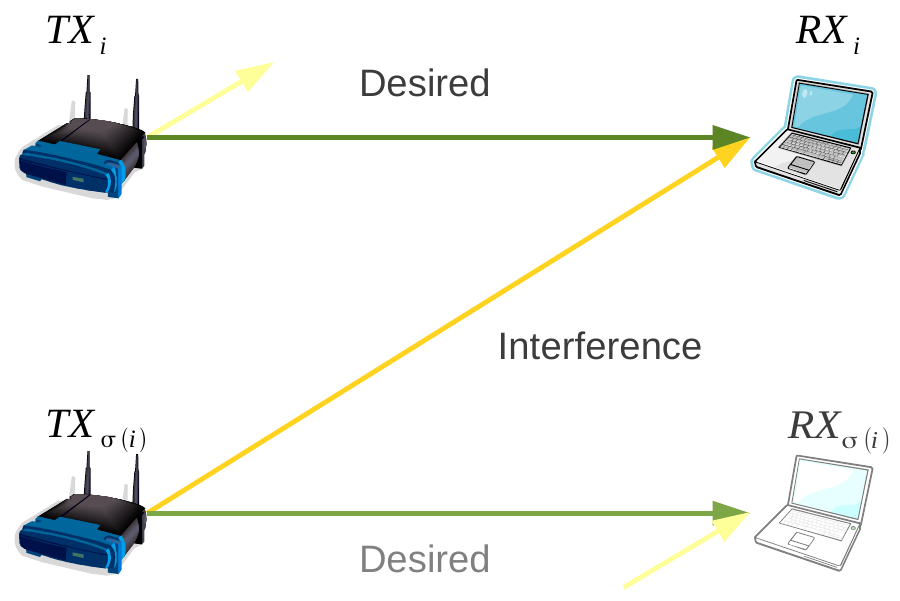} \\
    \includegraphics[width=0.4\textwidth]{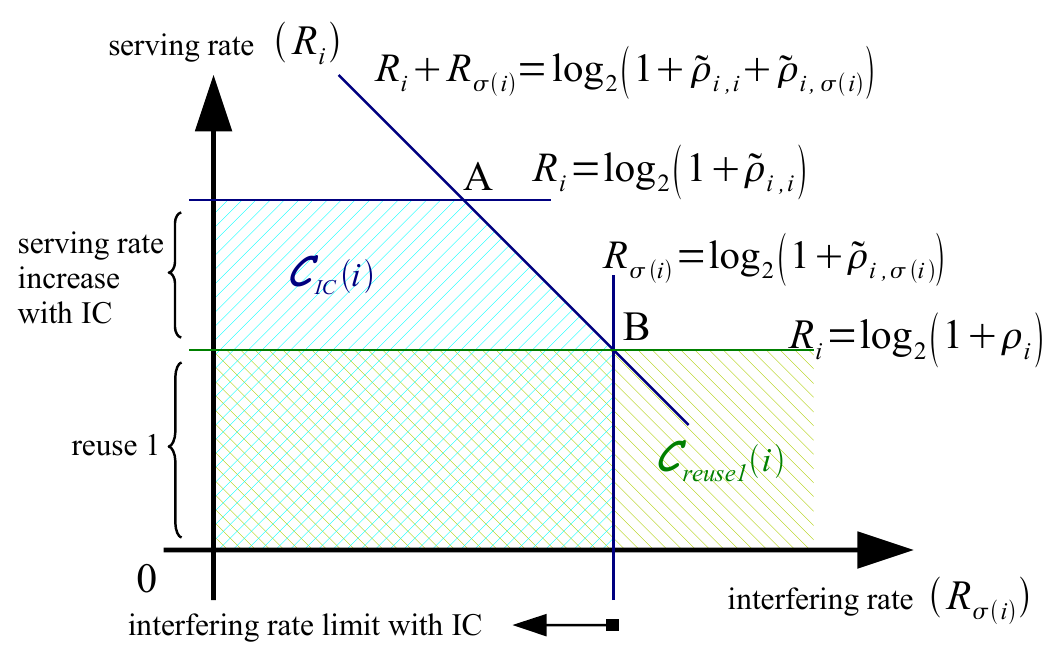}

  \caption{\textbf{Rate region with IC}:  When a receiver RX$i$ is capable of IC,
  its achievable rate depends on both the serving rate rate $R_i$ as well as the
  rate $R_{\sigma(i)}$ on the interfering link. The total rate region, which includes
  the options of both reuse 1 and treating interference as noise as well as joint detection,
  is non-convex. }
  \label{fig:region}
\end{figure}

More sophisticated methods, such as power control and rate splitting as used in the
well-known Han-Kobayashi (HK) method \cite{HanKob:81}, can also be incorporated into 
this methodology.  For example, to incorporate the HK technique, the transmitter TX$i$
would splits its transmissions into ``private" and ``public" parts and the decision
variable $x_i$ for that link would be a vector including the private and public rates
and their power allocations.  However, for simplicity, we do not simulate this scenario.

\section{Message Passing Algorithms}

\subsection{Graphical Model Formulation}

As mentioned in the Introduction, graphical models provide a general and systematic
approach for distributed optimization problems where the objective function admits
a factorization into terms each with small numbers of variables \cite{WainwrightJ:08,Frey:98}.
The optimization \eqref{eq:xopt} with
objective function \eqref{eq:Fsum} is ideally suited for this methodology.

To place the optimization problem into the graphical model formalism, we define
a \emph{factor} graph $G=(V,E)$, which is an undirected
bipartite graph whose vertices
consists of $n$ variable nodes associated with the decision variables $x_i$, $i=1,\ldots,n$,
and $n$ factor nodes associated with factors $f_i$, $i=1,\ldots,n$.
There is an edge between $x_\ell$ and $f_i$ if and only $x_\ell$ appears as an argument
in $f_i$ -- namely if $\ell=i$ or $\ell=\sigma(i)$.

As an example of the factor graph consider the network in the top panel of
Fig.~\ref{fig:netExample} with $n=6$ links.  Each link has a transmitter and receiver
and the diagram indicates both the serving and interfering links.
The factor graph representation is shown in the bottom panel where there is one
variable, $x_i$, and one factor, $f_i$, for each link $i=1,\ldots,n$.

\begin{figure}
\centering
\includegraphics[width=0.25\textwidth]{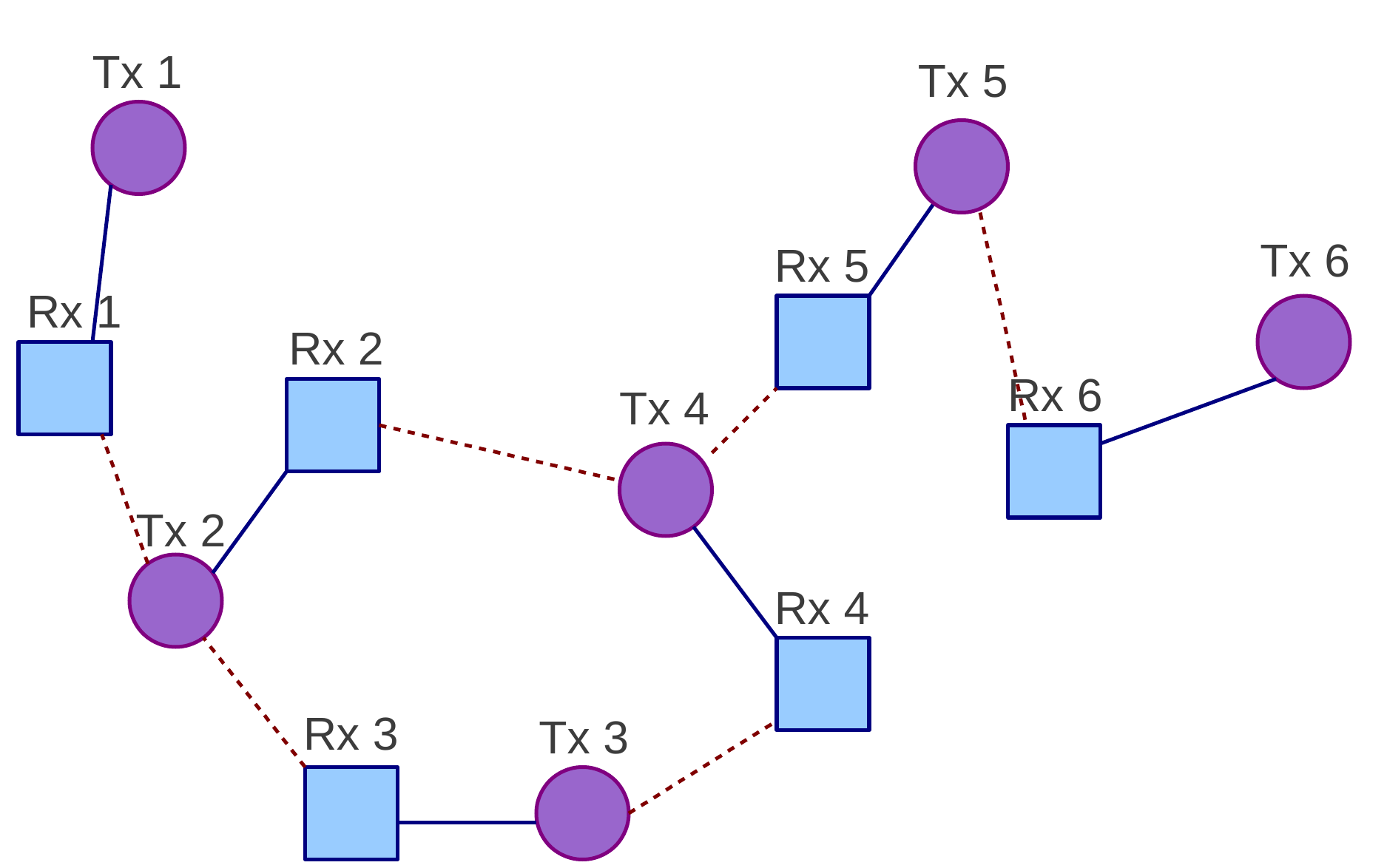} \\
\includegraphics[width=0.35\textwidth]{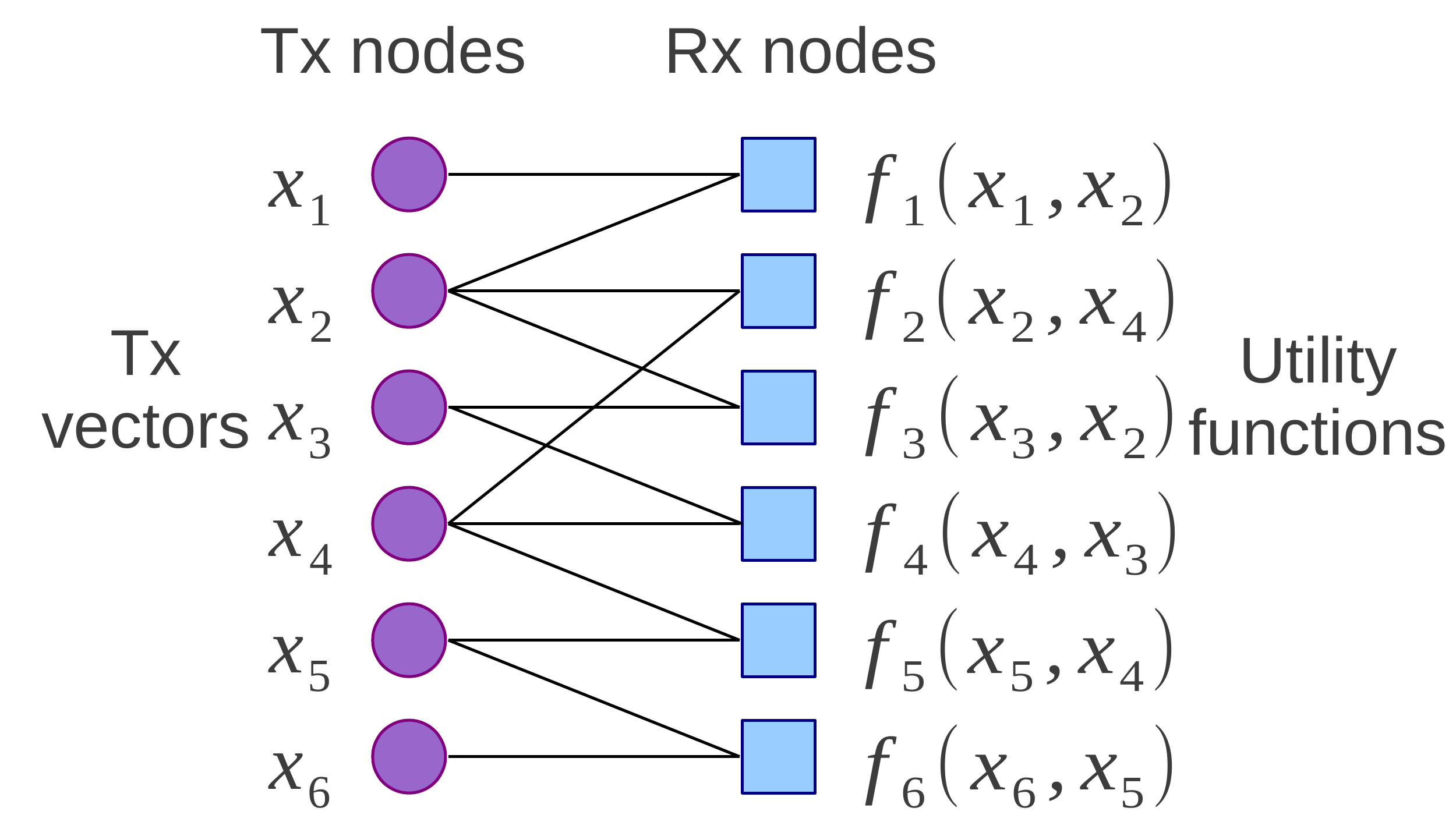}
\caption{\textbf{Example factor graph representation of a network:}
The top panel shows an example network with $n=6$ links, each with a transmitter and receiver.
Solid lines indicate serving links and dotted lines indicate interfering links.
The bottom panel shows its factor graph representation.}
\label{fig:netExample}
\end{figure}

\subsection{Max-Sum Loopy BP}
Once the optimization problem has been formulated as a graphical model, we can apply
the standard max-sum loopy BP algorithm.  A general description of the max-sum algorithm can be found
in any standard graphical models text such as \cite{WainwrightJ:08,Frey:98}.
For the dominant interferer optimization problem, the max-sum algorithm can be implemented
as shown in Algorithm \ref{algo:maxSum}.

Loopy BP is based on iteratively passing ``messages" between the variables and factor nodes.
For the interferer problem, the messages are passed between the receivers RX$i$ and
transmitters TX$j$ where TX$j$ is either the serving transmitter or dominant interferer
of RX$i$.

Each message is a function of the scheduling decision $x_j$.
The message function from  TX$j$ to RX$i$ is denoted
$\mu_{j \ra i}(x_j)$ and represents an estimate of the value of TX$j$ selecting
the scheduling decision $x_j$.  This message can thus be interpreted as a sort of ``soft"
request-to-send.  We use the term ``soft" since the message ascribes a value to each possible
scheduling decision.
Similarly, the messages from RX$i$ to TX$j$ is denoted $\mu_{j \la i}(x_j)$
and has a role has both a ``soft" clear-to-send and channel quality indicator (CQI).

\begin{algorithm}
\caption{Max-sum loopy BP for scheduling}
\label{algo:maxSum}
\begin{algorithmic}

    \State \{Initialization at the RX \}
    \ForAll {RX$i$}
        \State $\mu_{i \ra i}(x_i) \gets 0$
        \State $\mu_{j \ra i}(x_j) \gets 0$ for $j=\sigma(i)$
    \EndFor
    \Repeat

        \State \{ Receiver half round \}
        \ForAll{RX$i$}
            \State $j \gets \sigma(i)$
            \State $\mu_{i \la i}(x_i) \gets
                \max_{x_j} f_i(x_i,x_j)+ \mu_{j \ra i}(x_j)$
            \State $\mu_{j \la i}(x_j) \gets \max_{x_{i}}
                f_i(x_i,x_j)+\mu_{i \rightarrow i}(x_i)$
        \EndFor

        \State {}
        \State \{ Transmitter half round \}
        \ForAll {TX$j$}

            \State $H_j(x_j) \gets \mu_{j \la j}(x_j) +
                \sum_{i:j=\sigma(i)}\mu_{j \la i}(x_j)$
            \State $\mu_{j \ra j}(x_j) \gets  H_j(x_j) - \mu_{j \la j}(x_j)$
            \ForAll {$i$ s.t. $j=\sigma(i)$}
                \State $\mu_{j \ra i}(x_j) \gets  H_j(x_j) - \mu_{j \la i}(x_j)$
            \EndFor
      \EndFor
    \Until {max number of iterations}
    \State {}
    \State \{ Final scheduling decision \}
    \ForAll {TX$j$}
        \State $H_j(x_j) \gets \mu_{j \la j}(x_j) +
                \sum_{i:j=\sigma(i)}\mu_{j \la i}(x_j)$
        \State $\xhat_j \gets \argmax_{x_j} H_j(x_j)$
    \EndFor

\end{algorithmic}
\end{algorithm}

As shown in Algorithm \ref{algo:maxSum}, the messages at the receiver are initialized at zero
and then iteratively updated in a set of \emph{rounds}.  Each round consists of two
halves:  In the first half, the receivers send messages to the transmitters and,
in the second half, the transmitters send messages back to the receivers.
The process is a repeated for a fixed number of iterations, and the final messages are
used to compute the scheduling decision at the transmitters.

\subsection{Optimality under the Dominant Interferer Assumption}
We will discuss more detailed issues in implementing the algorithm momentarily.
But, we first establish an important optimality, which is the main justification for the algorithm.

\medskip
\begin{theorem} \label{thm:fixPoint}  Consider the message passing max-sum algorithm,
Algorithm \ref{algo:maxSum}, applied to the optimization \eqref{eq:xopt}
with the objective function \eqref{eq:Fsum} for \emph{any} utility functions $f_i(x_i,x_{\sigma(i)})$
and dominant interferer selection function $\sigma(i)$.
Then, if all messages $\mu_{i \ra j}(x_i)$ and $\mu_{i \la j}(x_i)$
 are fixed-points of the algorithm, the resulting scheduling decisions
 $\xhat_j$ are \emph{globally} optimal solutions to \eqref{eq:xopt} in that
 \[
    F(\xbfhat) \geq F(\xbf),
 \]
 for all other scheduling decision vectors $\xbf$.
\end{theorem}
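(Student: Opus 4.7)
The plan is to exploit the special structure of the factor graph produced by the dominant interferer assumption, and then invoke a classical optimality theorem for loopy BP on graphs with few cycles.

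First I would describe the factor graph $G=(V,E)$ precisely. By construction, each factor node $f_i$ depends only on the two variables $x_i$ and $x_{\sigma(i)}$, so every factor node has degree exactly two. One may then collapse each such degree-two factor into an edge, obtaining an undirected graph $G'$ on the variable nodes $\{1,\ldots,n\}$ with one edge $\{i,\sigma(i)\}$ contributed by each factor $f_i$. This gives at most $n$ edges on $n$ vertices, which already hints at a bounded cyclomatic complexity.

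Second, I would sharpen the cycle count by orienting each contributed edge from $i$ to $\sigma(i)$. The resulting directed graph has out-degree exactly one at every node, i.e.\ it is a \emph{functional graph}. A standard combinatorial fact about functional graphs is that every weakly connected component consists of exactly one directed cycle with trees hanging off it (the so-called ``rho-shape''). Consequently, every connected component of $G'$, and hence every connected component of the original factor graph $G$, contains at most one cycle.

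Third, I would invoke the Weiss--Freeman result \cite{WeissFree:01}, which shows that on any factor graph in which every connected component contains at most one cycle, any fixed point of the max-sum (equivalently, in the log domain, max-product) message-passing algorithm yields an assignment that globally maximizes the sum-of-factors objective. Applied to our graph, the decisions $\xhat_j$ produced by Algorithm \ref{algo:maxSum} from fixed-point messages must therefore satisfy $F(\xbfhat) \geq F(\xbf)$ for every $\xbf$, which is exactly the theorem.

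The main obstacle will be the structural reduction rather than the analytical core. I need to be careful that collapsing the degree-two factors does not hide or multiply cycles, that degenerate cases such as $\sigma(i)=j$ with $\sigma(j)=i$ are properly counted as a single loop in the original bipartite $G$ (not erased), and that the message-update conventions in Algorithm \ref{algo:maxSum} --- in particular the ``subtract the incoming message'' step $\mu_{j \ra i}(x_j) = H_j(x_j) - \mu_{j \la i}(x_j)$ --- match the form of max-sum BP for which \cite{WeissFree:01} establishes fixed-point optimality, without requiring further convexity or uniqueness-of-maximizer hypotheses on the $f_i$.
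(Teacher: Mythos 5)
Your proposal is correct and follows the same overall architecture as the paper: collapse the degree-two factors into a pairwise graph on the variable nodes, show that every connected component of that graph contains at most one cycle, and then invoke the Weiss--Freeman fixed-point optimality theorem \cite{WeissFree:01}. Where you genuinely diverge is in how the single-cycle property is established. The paper orients each edge from $\sigma(i)$ to $i$, observes that every node then has in-degree at most one, proves an induction lemma (that on any undirected path either the first or last segment is directed outward), and rules out two cycles in one component by a three-case contradiction argument (disjoint cycles, one shared vertex, a shared segment). You instead orient each edge from $i$ to $\sigma(i)$, note that the result is a functional graph with out-degree exactly one everywhere, and appeal to the standard structure theorem (each weakly connected component is one directed cycle with in-trees attached), or equivalently just count: a weakly connected component on $k$ vertices carries exactly $k$ edges, so its cyclomatic number is exactly one. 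Your route is shorter and arguably more transparent, at the cost of leaning on a known combinatorial fact rather than proving everything from scratch; the paper's route is self-contained but longer. One point you flag that deserves the care you give it: when $\sigma(i)=j$ and $\sigma(j)=i$, the two factors $f_i$ and $f_j$ contribute parallel edges between $i$ and $j$, which form a genuine four-cycle in the bipartite factor graph; your edge count must therefore be done in the multigraph (or the two factors merged into one pairwise term $f_{ij}$), and either way the component still has exactly one cycle, so the conclusion stands. The paper glosses over this degenerate case entirely, so your explicit attention to it is a small improvement rather than a gap.
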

\begin{proof}  See Appendix \ref{sec:proof}.
\end{proof}

\medskip
The theorem shows that if the max-sum algorithm converges to a fixed point,
the solution will be optimal.  Remarkably, this result applies to arbitrary utility functions
as long as the dominant interferer assumption is valid.  In particular, the result applies
to even non-convex utilities such as the ones arising in the IC problem.
The result is unexpected since loopy BP is generally only guaranteed to be optimal in
graphs with no cycles, and the factor graph we are considering may not be acyclic.

Of course, the theorem does \emph{not} imply the convergence of the algorithm.
Indeed, since the graph has cycles, the algorithm may not converge.  However,
the iterations of loopy can be ``slowed down" using fractional updates with the same
fixed points to improve convergence at the expense of increased number of iterations.
However, as we will see in the simulations, convergence does not appear to be an issue
in the cases we examine.

\subsection{Implementation Considerations}

We conclude with a brief discussion of some of the practical considerations in 
implementing the message-passing algorithm in commercial cellular systems.

\begin{itemize}
\item \emph{Communication overhead:}  Most important, it is necessary to recognize that
the message-passing needed for the proposed algorithm would require additional
control channels not present in current cellular standards such as 
LTE or UMTS.  New control channels would be needed to communicate the rounds of messaging
prior to each scheduling decision.  However, similar messages have been proposed in 
Qualcomm's peer-to-peer system, FlashLinQ \cite{FlashLinq:10-allerton},
and also appear in some optimized CSMA algorithms such as \cite{NiSrikant:09}.
Under the dominant interferer assumption,  the communication overhead may particularly small 
since each receiver must send
and receive messages to and from at most
two transmitters in each round (the serving transmitter and the dominant interferer).  
However, each message is an entire function,
so the message must in principle
contain a value for each possible $x_i$.  However, if the number of values $x_i$ is small,
or the function can be well-approximated in some simple parametric form, the overall communication
per round overhead will be low.  Moreover, as we will see in the simulations, very small numbers
of rounds are typically needed; sometimes as small as one or two.

\item \emph{Computational requirements:}  The computational requirements are low.
The transmitter must simply sum functions, and the receivers must perform one-dimensional
optimizations.  Thus, as long as the sets of possibilities for each $x_i$ is small, the
computation will be minimal.

\item \emph{Channel and queue state information:}
The utility $f_i(x_i,x_j)$ is generally based on the rate achievable at RX$i$
as a function of the scheduling decisions at both the serving transmitter TX$i$ and the
dominant interferer TX$j$.  Thus, for the receiver to perform the optimization
in the message passing algorithm, the receiver must generally know the channel state
from the both the serving and interfering channels along with noise from all other sources.
This channel state can be estimated from pilot or reference signals which are already
transmitted in most cellular systems.
However, the utility function will also generally depend on
the queue state at the transmitter, particularly,
the priorities and size of packets in the transmit buffer.
This information must thus be passed, somehow, to the receiver.  Cellular systems such as LTE
already pass such information in the uplink through buffer status reports \cite{Dahlman:07},
however a similar channel would need to be placed in the downlink for this message passing
algorithm to work.

\item \emph{Synchronization:}  Since the message passing loopy BP algorithm
is based on coordinated scheduling, there is an implicit assumption of synchronization.
In a system such as LTE, this would require synchronization at the subframe level,
where the subframes are 1ms.  Since the intended use case for such coordinated scheduling
applications is for small cell systems (e.g.\ 100 to 200m cell radii), such synchronization
accuracy is reasonable.

\end{itemize}

\section{Simulation Results}

We validate the max-sum loopy BP algorithm for interference cancellation
in two simulations.  In both cases, we consider static optimizations of log utility,
$U(R_i) = \log R_i$, which is the proportion fair metric \cite{KelleyMT:98}.
For the link-layer model, we assume that the
rate region is described by the Shannon capacity with a loss of 3dB and there is a maximum
spectral efficiency of 5 bps/Hz.  The maximum spectral efficiency in practice is determined
by the maximum constellation, which itself is usually limited by the receiver dynamic range.
In the implementation of loopy BP, we assume a discretization of
the rates using logarithmic spacing of the 25 rate points from zero to the maximum spectral efficiency.
In practice, one should use the discrete modulation and coding
scheme (MCS) values that are available at the link layer.  In LTE,
there are approximately 30 such MCS values \cite{3GPP36.211}.

Note that the final rates $\xhat_j$ from Algorithm \ref{algo:maxSum} may not be feasible,
unless the algorithm has been run to convergence.  To evaluate the algorithm
prior to convergence, we obtain a feasible point from the values $\xhat_j$
at the end of the algorithm as follows:
At each receiver RX$i$, if $(\xhat_i,\xhat_{\sigma(i)}) \in {\mathcal C}(i)$ (that is,
the rate pair is feasible), we take the rate $R_i=\xhat_i$.  Otherwise we take $R_i$ to
be the largest value with $R_i \leq \xhat_i$ and
$(R_i,\xhat_{\sigma(i)}) \in {\mathcal C}(i)$.  It is simple to show that
the resulting set of rates, $(R_i,R_{\sigma(i)})$, is feasible for all $i$.
This final projection step can be performed with local information only.

\subsection{3GPP Femtocellular Apartment Model}
In our first simulation, we use a simplified version of an industry standard model \cite{FemtoForum:10}
for validating ICIC algorithms of femtocells in densely packed apartments.
The apartments are assumed to be laid out in a 4x4 grid, each apartment being 10x10m.
In 10 of the 16 apartments, we assume there is an active link with
a femto base station and mobile (user equipment or UE in 3GPP terminology).
The femtos are assumed to operate in closed access mode so that a
femto UE can only connect to the femto base station in its apartment.
This restricted association is a major source of strong interference in femtocellular
networks and thus a good test case for IC.
The complete list of simulation parameters are given on the Table~\ref{tbl:simParam1}.

\begin{table}
  \begin{center}
    \begin{tabular}{|p{1in}|p{2in}|}
      \hline
      Parameter & Value \\ \hline
      Network topology & $4 \x 4$ apartment model,with active links in 10 of the 16 apartments. \\ \hline
      Bandwidth & 5 MHz \\ \hline
      Wall loss & 10 dB \\ \hline
      Lognormal shadowing & 10 dB std.\ dev.\ \\ \hline
      Path loss & $38.46 + 20\log_{10}(R) + 0.7R$ dB, $R$ distance in meters. \\ \hline
      Femto BS TX power & 0 dBm \\ \hline
      Femto UE noise figure & 4 dB \\ \hline
    \end{tabular}
  \end{center}
  \caption{Simulation parameters for apartment model.}
  \label{tbl:simParam2}
\end{table}

Fig.~\ref{fig:simulation} shows the
cumulative probability distribution of resulting spectral efficiencies (bits/sec/Hz)
based on 100 random drops of the network.
In this figure, IC is compared against reuse 1 where no IC is used and all links transmit
on the same bandwidth, treating interference as noise.  We see that for cell edge users
(defined as users in the lowest 10\% of rates), there is a gain of almost 4 times in the rate
by employing IC.  Hence, IC has the potential for significant value for improving rates
in such femtocellular scenarios, particularly for mobiles in strong interference.

Also shown in Fig.~\ref{fig:simulation} is a comparison of the optimal rate selection
against the selection from max-sum loopy BP run for 4 iterations.  
We see there is an exact match.  In fact,
even after only 1 iteration there is a minimal difference.  The optimal rate selection is found
as follows:  For each RX$i$, the rate pair $(R_i,R_{\sigma(i)})$ must be either in
the reuse 1 region ${\mathcal C}_{\rm reuse1}(i)$ or the joint detection region
${\mathcal C}_{\rm IC}(i)$.  Since there are two choices for each link, there are a total
of $2^n$ choices in the network.  For each of the $2^n$ choices, the total rate region is convex
and we can maximize the utility over that region.  For the optimal rate selection, we
repeat the convex optimization for every one of $2^n$ choices and select the one with the
maximum total utility.  Obviously, the max-sum loopy BP algorithm is much simpler.

\begin{figure}
\centering
\includegraphics[trim = 30mm 95mm 38mm 95mm, clip,width=0.45\textwidth]{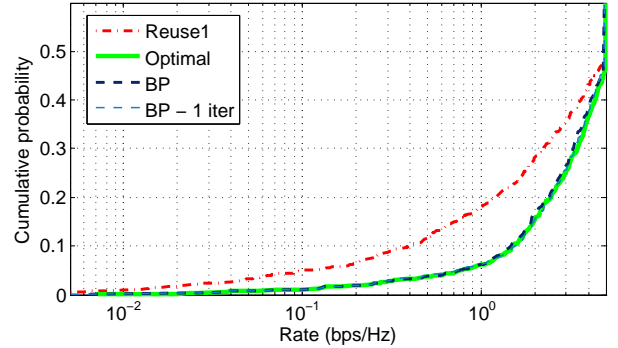}
\caption{\textbf{Simulation with 3GPP apartment model:} Plotted are the distribution of rates
(bps/Hz) over 100 drops.
We see that IC provides significant gains particularly at the cell edge.  Moreover, even one iteration
of max-sum loopy BP finds near-optimal rates for the non-convex optimization problem.}
\label{fig:simulation1}
\end{figure}

\subsection{Road Network with Mobility}

\begin{figure}
\centering
\includegraphics[width=0.35\textwidth]{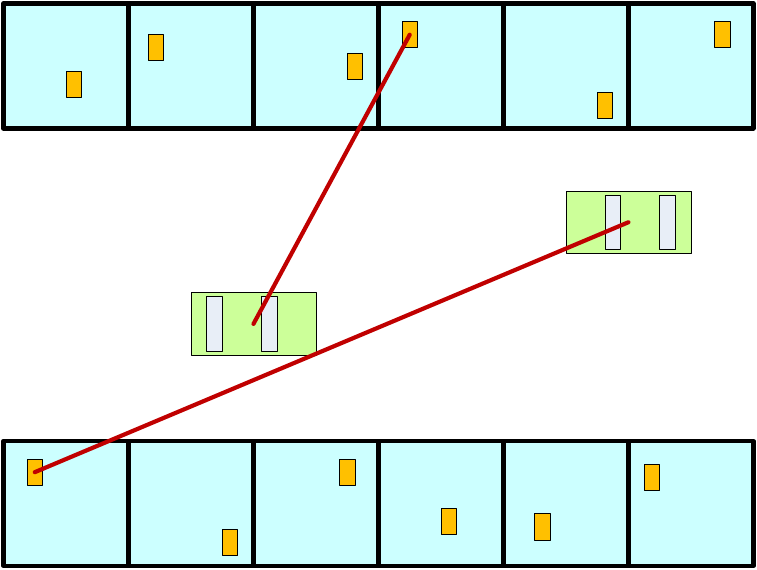}
\caption{\textbf{Road network model:}  Using open-access femtocells within apartments
may enable low-cost wide-area cellular coverage.  However, mobility within small cells
can result in significant interference as handover delays cause mobile users to drag into
neighboring cells. }
\label{fig:roadNet}
\end{figure}

A second case of strong interference in femto- or picocellular networks is handover delays.
When cells are small, handovers are frequent.  Also, since femtocells are not be directly
into the operator's core network \cite{3GPP25467},
the handovers between femtocells or from the femtocells to the macrocell
may be significantly delayed \cite{AndrewsCDRC:12}.
Due to the delays, the mobile may drag significantly into cells other than the serving cell,
exposing the mobile to strong interference.

To evaluate the ability of IC to mitigate this strong interference, we considered
a road network model shown in Fig.~\ref{fig:roadNet}.  A similar model was considered in
\cite{RanganErkip:11}.
In this model, the transmitters are femtocells placed in apartments at two sides of
the road. The receivers are mobile on the road with a random velocity between 15 and 25 m/s.
Each mobile is initially connected to the strongest serving cell.
After the connection, we let mobiles to move for a time uniformly distributed between 0 and 1 second
to model the effect of delayed handover.
We then perform a static simulation taken at this snapshot in time.
The complete list of simulation parameters are given on the Table~\ref{tbl:simParam2},
which are loosely based on the path loss models in \cite{FemtoForum:10}.

\begin{table}
  \begin{center}
    \begin{tabular}{|p{1in}|p{2in}|}
      \hline
      Parameter & Value \\ \hline
      Network topology & 10 Tx's in/on apartments by the road, 10 mobile Rx's on the road in periodic\\ \hline
      Road width & 10 m\\ \hline
      Road length & 50 m periodicity\\ \hline
      Apartment dim. & 10 m width and 20 m length\\ \hline
      Bandwidth & 5 MHz \\ \hline
      Wall loss & 0 dB \\ \hline
      Lognormal shadowing & 10 dB std.\ dev.\ \\ \hline
      Path loss & $10 + 37\log_{10}(R)$ dB, $R$ distance in meters. \\ \hline
      Rx velocity & 15 - 25 m/s \\ \hline
      Femto BS TX power & 0 dBm \\ \hline
      Femto UE noise figure & 4 dB \\ \hline
    \end{tabular}
  \end{center}
  \caption{Simulation parameters for road model.}
  \label{tbl:simParam1}
\end{table}

Fig.~\ref{fig:simulation} shows the distribution of rates based on 100 random drops.
Again, we see that IC provides significant improvement in cell edge rate:
the 10\% rate is increased by more than a factor of 3.  In fact, even the median
rate is increased almost twofold.  In addition, max-sum loopy BP finds near optimal rates.

\begin{figure}
\centering
\includegraphics[trim = 30mm 95mm 38mm 95mm, clip,width=0.45\textwidth]{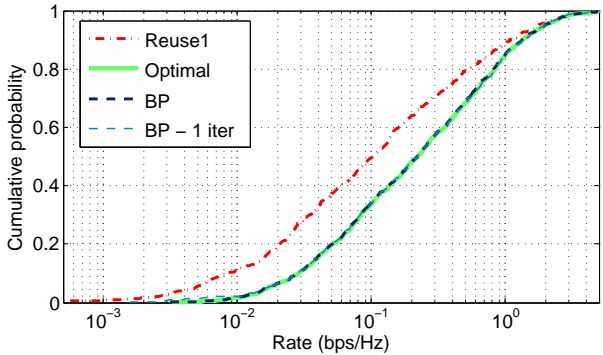}
\caption{\textbf{Simulation for the road network model:}
Plotted are the distribution of rates (bps/Hz) over 100 drops.
Again, we see that IC provides significant gains in rates throughout the rates,
but particularly at the cell edge.  Moreover,
max-sum loopy BP finds near-optimal selections after only 1 iteration.}
\label{fig:simulation}
\end{figure}

\section{Conclusions}

A general methodology based on max-sum loopy BP
is presented for wireless scheduling and resource allocation problems
in networks where each link has at most one dominant interferer.
The algorithm is extremely general and can incorporate advanced link-layer mechanisms
available in modern cellular systems.  Moreover, the method is completely distributed,
requires low communication overhead and minimal computation.
In addition, although we did not establish the convergence of the algorithm,
we have shown that if it converges, the result is guaranteed to be globally optimal.
Remarkably, this result holds for arbitrary problems, even when the problems are non-convex,
the only requirement being the dominant interferer assumption.

The method was applied to systems with IC.  We propose that when IC is available at the link-layer,
rate control as opposed to power control can provide an effective mechanism for interference
coordination.  Moreover, the optimal rate selection is performed extremely well by
max-sum loopy BP.  Simulations of the algorithm in femtocellular networks demonstrate
considerable improvement in capacity from the addition of IC for handling both restricted
association and mobility.
In addition, loopy BP appears to
provide near optimal performance with only one or two rounds of messaging, thus making the methodology
extremely attractive for practical networks.

\ifthenelse{\boolean{conference}}{
\section*{Acknowledgements}
This material is based upon work supported by the National Science
Foundation under Grant No. 1116589.
}{
}

\appendices
\section{Proof of Theorem \ref{thm:fixPoint}} \label{sec:proof}

We begin by reviewing the following well-known result from \cite{Weiss:00,WeissFree:01}:
Let $G_U = (V,E_U)$ be an undirected graph with vertices $V = \{1,\ldots,n\}$, and suppose
that a function $F(\xbf)$ admits a pairwise factorization of the form
\beq \label{eq:Fpair}
    F(\xbf) = \sum_{(i,j)\in E_U}f_{ij}(x_i,x_j),
\eeq
for some set of functions $f_{ij}(x_i,x_j)$.  Note that the graph $G_U$ is \emph{not} the factor graph.
One can then apply max-sum loopy BP to attempt to perform the optimization
\beq \label{eq:xoptPair}
    \xbfhat = \argmax_{\xbf} F(\xbf).
\eeq
It is well-known that max-sum loopy BP will provide the optimal solution when $G_U$ has no cycles.
However, our analysis requires the following.

\medskip
\begin{theorem}[\!\cite{WeissFree:01}]  \label{thm:SLT}
Consider max-sum loopy BP applied to the optimization
\eqref{eq:xoptPair} to an arbitrary
pairwise objective function of the form \eqref{eq:Fpair}.  Suppose that each
connected component of the undirected graph $G_U$ has at most one cycle.
Then, if the belief messages are fixed points of max-sum loopy BP, the corresponding
solution $\xbfhat$ is optimal in that $F(\xbfhat) \geq F(\xbf)$ for all
vectors $\xbf$.
\end{theorem}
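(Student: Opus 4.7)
The plan is to reduce to a per-component analysis, since both the objective \eqref{eq:Fpair} and the max-sum updates decouple across connected components of $G_U$. Under the hypothesis, every component is either a tree or \emph{unicyclic} (a single cycle with trees attached). The tree case is classical: max-sum BP on a tree of diameter $d$ converges after $d$ rounds to the exact max-marginals, so any fixed point must coincide with the output of this finite-horizon run, and $\xhat_i := \argmax_{x_i} b_i(x_i)$ is globally optimal. The whole argument therefore concentrates on the unicyclic case.

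The standard tool is the \emph{computation tree}: the universal cover $T_r$ of $G_U$ unrolled from an arbitrary root $r$ to depth $t$. A short induction on message-passing rounds shows that $t$ iterations of loopy max-sum BP on $G_U$ are exactly equivalent to a single leaves-to-root sweep of exact max-sum BP on $T_r$, with the leaf messages inherited from the loopy initialization. If the loopy messages are at a fixed point, this equivalence holds for \emph{every} $t$, so the root belief $b_r(x_r)$ equals the true max-marginal of the tree objective $F_{T_r}$ at $r$. Consequently $\xhat_r$, the argmax of $b_r$, is also the root coordinate of a global maximizer of $F_{T_r}$ on $T_r$.

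For each loopy assignment $\xbf$, define its \emph{lift} $\tilde{\xbf}$ to $T_r$ by copying $x_i$ to every replica of $i$. A direct edge count gives $F_{T_r}(\tilde{\xbf}) = \sum_{e \in E_U} n_e(t)\, f_e(x_i, x_j)$, where $n_e(t)$ is the replica count of edge $e$ in $T_r$. The decisive structural fact is that in a unicyclic component, every cycle edge and every fixed tree-edge appears $\Theta(t)$ times with $n_e(t)/n_{e'}(t) \to 1$ as $t \to \infty$. Hence $F_{T_r}(\tilde{\xbf})/t \to c\, F(\xbf)$ uniformly in $\xbf$ for some $c>0$, so in the large-$t$ limit, maximizing $F_{T_r}$ over lifts is equivalent to maximizing $F$ over loopy assignments. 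I would then close the argument by showing that, on the computation tree of a unicyclic component, a globally optimal tree assignment can always be chosen to be a lift of some loopy assignment — equivalently, that the tree-optimum is eventually periodic around each copy of the cycle — forcing $\xhat_r$ to agree with the root value of a loopy-optimal assignment.

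The main obstacle is exactly this last step: showing the unrestricted tree optimum is attained by a lift. The intuition is that cycle edges in $T_r$ form long repeated chains whose boundary contributions differ from a fully periodic assignment only by $O(1)$, a discrepancy absorbed by the $\Theta(t)$-growth of the objective. In a component with two or more cycles, by contrast, the computation tree branches exponentially and edge multiplicities diverge at different rates, so non-periodic tree assignments can gain more than $O(1)$ on the leading-order scale — which is precisely why the ``at most one cycle'' hypothesis is needed. Routine tie-breaking (e.g.\ by an infinitesimal generic perturbation of each $f_e$, passing to the limit at the end) handles non-uniqueness of the maximizers.
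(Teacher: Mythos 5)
First, a point of reference: the paper does not prove Theorem~\ref{thm:SLT} at all --- it is imported verbatim from \cite{WeissFree:01} (see also \cite{Weiss:00}), and the appendix only verifies that the dominant-interferer graph satisfies its hypothesis. So there is no in-paper proof to compare against; what you have written is an attempt to reprove the cited result itself, and it must be judged on its own.

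On those terms, your architecture is the standard and correct one (component decomposition, the tree case, the equivalence of $t$ rounds of loopy max-sum on $G_U$ with a single sweep on the depth-$t$ computation tree $T_r$, and the fact that in a unicyclic component every edge multiplicity satisfies $n_e(t)=ct+O(1)$ with a common $c$), but the proof is not complete, and you say so yourself. The decisive claim --- that the unrestricted maximizer of $F_{T_r}$ can be taken, up to an $O(1)$ loss, to be the lift of a single assignment on $G_U$ --- is exactly where the content of the theorem sits, and it is asserted with intuition rather than proved. This step is not routine: knowing that $\xhat_r$ is the root coordinate of a maximizer of $F_{T_r}$ for every $t$ does not yet yield that it is the root coordinate of a maximizer of $F$ in \eqref{eq:xoptPair}, because one must rule out non-periodic tree assignments gaining $\Theta(t)$ over every lift. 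Ruling this out amounts to an eventual-periodicity statement for the max-plus dynamic program along the unrolled cycle --- a Perron--Frobenius-type fact in the $(\max,+)$ semiring (the maximal mean weight of long paths in the transition graph converges to the maximum cycle mean, which is attained by a period-$n$ cycle, i.e.\ by a lift); this is essentially how \cite{Weiss:00} actually handles the single-loop case via its ``transfer matrix.'' The tie-breaking by generic perturbation of the $f_e$ also needs more care than stated, since perturbing the objective changes the BP fixed point you are analyzing, not merely the optimizer, and the theorem as stated does not assume unique maximizers. In short: right route, key lemma missing.
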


\medskip

Now, the function \eqref{eq:Fsum} we are interested
in is a special case of the form \eqref{eq:Fpair}, so we need to show that the
corresponding undirected graph, $G_U$, has the same single cycle property.
The undirected graph for the objective function \eqref{eq:Fsum} is simple to describe:
First let $G_D=(V,E_D)$ be the \emph{directed} graph with vertices $V=\{1,\ldots,n\}$ and
directed edges
\[
    E_D = \{(\sigma(i),i), ~i=1,\ldots,n \}.
\]
Then, if $G_U = (V,E_U)$ is the corresponding undirected graph to $G_D$, then it is also
the undirected graph for the objective function \eqref{eq:Fsum}.
For illustration, Fig.~\ref{fig:netExample1} shows the directed graph $G_D$ for the
example network mentioned earlier in Fig.~\ref{fig:netExample}.
Note that there is a directed edge from $j$ to $i$ if and only if TX$j$ is the dominant
interferer of RX$i$.  Since each receiver has at most one dominant interferer,
each node in $G_D$ has at most one incoming edge.  That is, the in-degree is at most one for all
nodes.  This property will be key.

\begin{figure}
\centering
\includegraphics[width=0.2\textwidth]{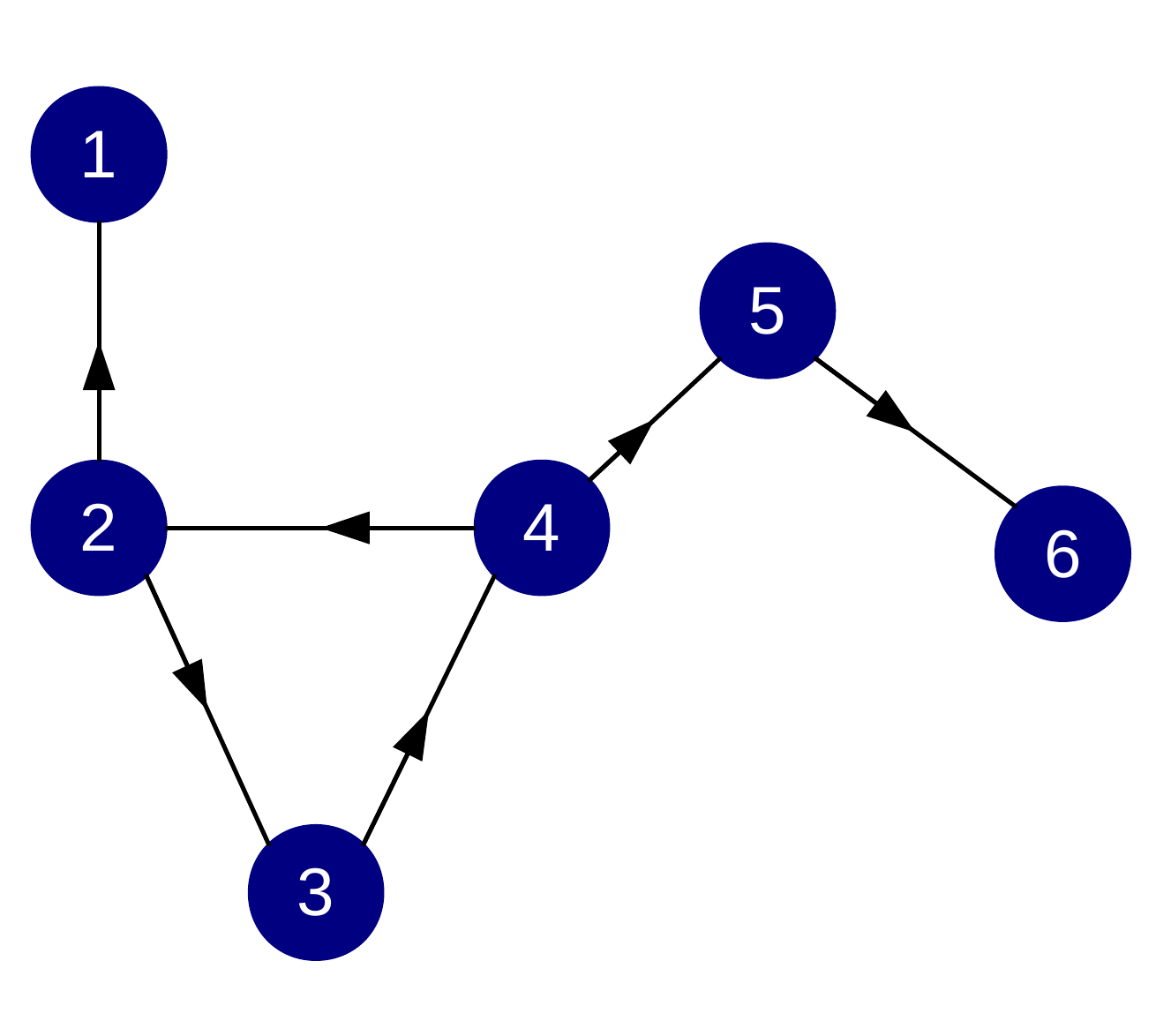}
\caption{\textbf{Directed graph} for the example network in Fig.~\ref{fig:netExample}.
There is a directed edge from $j$ to $i$ if and only if $j=\sigma(i)$, namely
TX$j$ is the dominant interferer of RX$i$. }
\label{fig:netExample1}
\end{figure}

To apply Theorem \ref{thm:SLT}, we need to show that each connected
component of $G_U$ has at most one cycle.
We begin with the following lemma.

\begin{lemma} \label{lem:path} Consider any path $P_M = (i_1,i_2,...,i_M)$,
in the undirected graph $G_U$.
Then either $(i_2,i_1) \in E_D$ or
$(i_{M-1},i_M) \in E_D$.  That is, either the first or last segments of the path are pointing
``outwards".
\end{lemma}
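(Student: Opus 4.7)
The key structural observation to exploit is that in $G_D$, every vertex has in-degree at most one: the only directed edge pointing \emph{into} vertex $i$ is $(\sigma(i),i)$, since $\sigma(i)$ is single-valued. Equivalently, if both $(a,i)\in E_D$ and $(b,i)\in E_D$ then $a=b=\sigma(i)$. I would conduct the proof as a case analysis on the orientation of the first edge of the path, combined with a short induction in one of the cases.

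The plan is as follows. The edge $\{i_1,i_2\}$ of $G_U$ comes from at least one directed edge of $G_D$, so either $(i_2,i_1)\in E_D$ or $(i_1,i_2)\in E_D$. In the first case we are immediately done: the first alternative of the conclusion holds. So suppose $(i_1,i_2)\in E_D$, which means $i_1=\sigma(i_2)$. I then claim by induction on $k$ that
\[
    (i_k,i_{k+1})\in E_D \quad\text{for every } k=1,\ldots,M-1,
\]
so that in particular the last segment satisfies $(i_{M-1},i_M)\in E_D$, giving the second alternative.

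For the inductive step, assume $(i_{k-1},i_k)\in E_D$, i.e.\ $\sigma(i_k)=i_{k-1}$. The edge $\{i_k,i_{k+1}\}$ is oriented in $G_D$ as either $(i_k,i_{k+1})$ or $(i_{k+1},i_k)$. The latter would mean $\sigma(i_k)=i_{k+1}$; combined with $\sigma(i_k)=i_{k-1}$ and the uniqueness of $\sigma(i_k)$, this forces $i_{k-1}=i_{k+1}$, contradicting the assumption that $P_M$ is a path (distinct vertices). Hence $(i_k,i_{k+1})\in E_D$, completing the induction. The case $M=2$ is trivial since the unique undirected edge $\{i_1,i_2\}$ must be oriented one way or the other, and either orientation is exactly one of the two alternatives.

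There is no real obstacle here: the whole argument reduces to the fact that each $\sigma(i)$ is unique, so once a path in $G_U$ starts moving ``forward'' (away from $i_1$ in the direction of $G_D$) it can never turn around, and an immediate reversal at the start is exactly the first alternative. The only subtlety worth flagging in a careful write-up is making sure paths are taken in the usual sense of having distinct vertices, since the contradiction $i_{k-1}=i_{k+1}$ is what drives the induction.
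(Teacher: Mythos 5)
Your proof is correct and rests on exactly the same mechanism as the paper's: the in-degree-at-most-one property of $G_D$ (uniqueness of $\sigma(i)$) forces the path, once oriented ``forward,'' to never turn around. The paper phrases this as an induction on the path length applied to the disjunctive statement itself, while you split on the orientation of the first edge and prove the stronger invariant that every subsequent edge points forward; this is a cosmetic difference, and your explicit remark that the vertices of the path must be distinct is a point the paper's own proof also relies on implicitly.
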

\begin{proof}  The proof follows by induction. For $M=2$, either
$(i_2,i_1)$ or $(i_1,i_2)$ must be a directed edge.  Now suppose the the lemma is true
for all paths of length $M-1$, and consider a path of length $M$.
If $(i_2,i_1) \in E_D$ we are done, so suppose $(i_2,i_1) \not \in E_D$.
Then, applying the induction hypothesis to $P_{M-1}$ shows that $(i_{M-2},i_{M-1})$ must be a
directed edge.  But, since the node $i_{M-1}$ can have at most one incoming edge,
$(i_{M-1},i_{M})$ must be outgoing and hence in $E_D$.  So, we have shown that either
$(i_2,i_1) \in E_D$ or $(i_{M-1},i_M) \in E_D$.
\end{proof}

\medskip

We wow return to the graph $G_U$ and show that each connected component can have at most
one cycle.  We prove the property by contradiction.
Assume there are two loops in the same connected component of $G_U$:
\beqan
    L &=& (i_1,i_2,...,i_K,i_1)\\
    J &=& (j_1,j_2,...,j_M,j_1),
\eeqan
with no repetitions of points within each loop, aside from the first and last points.
Now there are three possibilities as shown in Fig.~\ref{fig:graph}:  (a) The two loops
have no common point, (b) one common point and (c) two or more common points.  We show each of these
possibilities leads to a contradiction.
\begin{figure}
\centering
\includegraphics[width=0.2\textwidth]{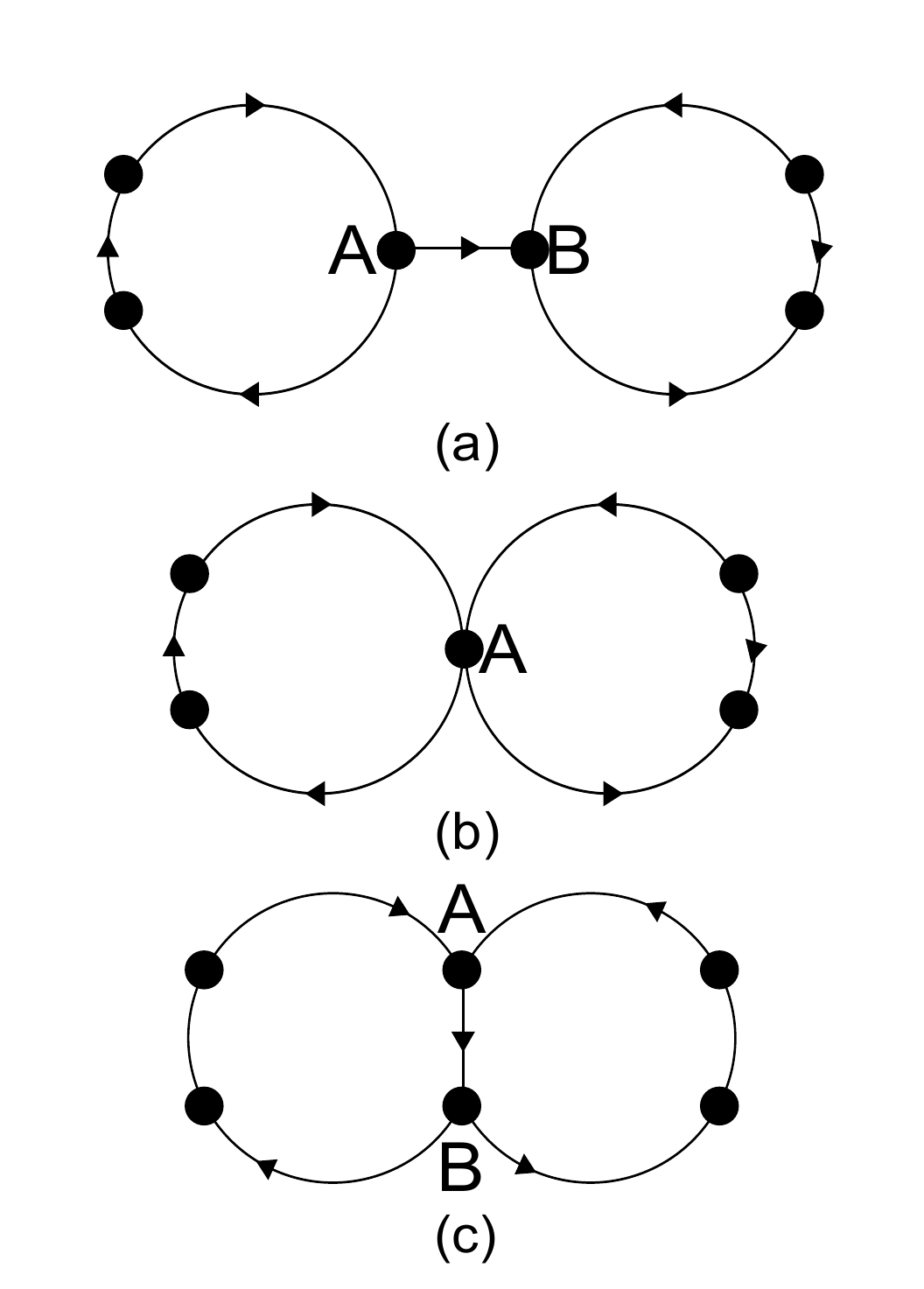}
\caption{Possible positions of two loops in the same connected component.}
\label{fig:graph}
\end{figure}

\paragraph*{Case (a) No common points}  By assumption, $L$ and $J$ are in the same
connected component, so there must be a path in the undirected graph
between between one point in $L$ and a second point
in $J$.  Without loss of generality (WLOG),
we can assume the path goes from $i_1$ to $j_1$ and we will write the
path as
\[
    P = (i_1=k_1, k_2, \ldots, k_N = j_1).
\]
We can construct this path such that, aside from $i_1$ and $j_1$, this path has no
common points with the loops $L$  or $J$.
Now, by Lemma \ref{lem:path}, either $(k_2,k_1)$ or $(k_{N-1},k_N)$ is a directed edge (i.e.\ in $E_D$).
WLOG suppose $(k_2,k_1) \in E_D$; the other case is similar.
The point $k_1$ would correspond to point B in Fig.~\ref{fig:graph} with the loop on
the right being loop $L$.
Thus $k_1=i_1$ has an incoming edge
from $k_2$.  But Lemma \ref{lem:path} applied to the path $L$ implies that there must
be an incoming edge from either $i_2$ or $i_{K}$.  When combined with the incoming edge from
$k_2$, $i_1$ has more than one incoming edge, which is impossible.

\paragraph*{Case (b) One common point} WLOG we can assume that the point $i_1=j_1$ is the
common point in loops $L$ and $J$.  The point is shown as A in Fig.~\ref{fig:graph}.
Applying Lemma \ref{lem:path} to both loops $L$ and $J$ show that there must be at least
one incoming edge into point A from each loop.  Thus, A has more than one incoming edge,
which again is a contradiction.

\paragraph*{Case (c) Two or more common points}
Taking a minimal common path, and possibly rearranging
the indexing, we can assume that the first $N$ points of the loops $L$ and $J$ are common so that
\[
    i_1=j_1, \ldots, i_N=j_N,
\]
but $i_{N+1} \neq j_{N+1}$, and $i_{K} \neq j_{M}$.
The situation is depicted in Fig.~\ref{fig:graph}, where point A corresponds to $i_1$ ($=j_1$)
and B corresponds to $i_N$ ($=j_N$).
Applying Lemma \ref{lem:path} to the path along the common segment from A to B,
\[
    P = (i_1, \ldots, i_N)=(j_1, \ldots, j_N),
\]
implies that either $(i_2,i_1)$ or $(i_{N-1},i_N)$ is a directed edge.  WLOG assume
$(i_{N-1},i_N)$ is a directed edge, creating an incoming edge into point B
(the point with index $i_N=j_N$).  Since any node can have at most one incoming edge, both the edges
$(i_N,i_{N+1})$ in loop $L$ and $(j_N,j_{N+1})$ in loop $J$ must be directed edges
outgoing from $i_N=j_N$.
Now consider the two paths from $i_N=j_N$ back to $i_1=j_1$ along loops $L$
and $J$.  From Lemma \ref{lem:path},
since the initial segments, $(i_N,i_{N+1})$ in loop $L$ and $(j_N,j_{N+1})$ in loop $J$,
are directed edges, the final segments $(i_K,i_1)$  and $(j_M,j_1)$ must also be
directed edges.  Thus, there are two incoming edges into point A
(the point with index $i_1=j_1$).
This is a contradiction.

\medskip
Thus, all three cases are impossible, and we have proven by contradiction that
any connected component of $G_U$ can have at most one cycle.  We can therefore
apply Theorem \ref{thm:SLT} and the proof of Theorem \ref{thm:fixPoint} is complete.

%
%




%

\bibliographystyle{IEEEtran}
\bibliography{bibl}

%
%








\end{document}